\documentclass{IEEEtran4PSCC}
\ifCLASSINFOpdf
   \usepackage[pdftex]{graphicx}
\else
   \usepackage[dvips]{graphicx}
\fi
%
%

%
\usepackage[cmex10]{amsmath}

\usepackage{multicol}
\usepackage{amssymb, amsthm}
\usepackage[capitalize,nameinlink,noabbrev]{cleveref}
\usepackage{bm}
\usepackage{tikz}
\usetikzlibrary{calc,positioning}

\newcommand{\proj}{\operatorname{proj}}

%


\hyphenation{op-tical net-works semi-conduc-tor}

\newtheorem{remark}{Remark}
\newtheorem{theorem}{Theorem}
\newtheorem{lemma}{Lemma}

\newtheorem{definition}{Definition}

\definecolor{paperBlue}{HTML}{0072B2}      
\definecolor{paperVermilion}{HTML}{D55E00} 
\definecolor{paperGreen}{HTML}{009E73}
\definecolor{paperSky}{HTML}{56B4E9}
\definecolor{paperPurple}{HTML}{CC79A7}

\makeatletter
\let\old@ps@headings\ps@headings
\let\old@ps@IEEEtitlepagestyle\ps@IEEEtitlepagestyle
\def\psccfooter#1{%
    \def\ps@headings{%
        \old@ps@headings%
        \def\@oddfoot{\strut\hfill#1\hfill\strut}%
        \def\@evenfoot{\strut\hfill#1\hfill\strut}%
    }%
    \def\ps@IEEEtitlepagestyle{%
        \old@ps@IEEEtitlepagestyle%
        \def\@oddfoot{\strut\hfill#1\hfill\strut}%
        \def\@evenfoot{\strut\hfill#1\hfill\strut}%
    }%
    \ps@headings%
}
\makeatother


\usepackage{titlesec}

\usepackage{xcolor}

\begin{document}

\title{Exact Recourse Functions for Aggregations of EVs Operating in Imbalance Markets}

\author{\IEEEauthorblockN{Karan Mukhi\IEEEauthorrefmark{1},
Licio Romao\IEEEauthorrefmark{2},
Alessandro Abate\IEEEauthorrefmark{1}}
\IEEEauthorblockA{\IEEEauthorrefmark{1} Department of Computer Science, University of Oxford, United Kingdom}
\IEEEauthorblockA{\IEEEauthorrefmark{2} Department of Wind and Energy Systems, Technical University of Denmark, Denmark}
}

\maketitle

\begin{abstract}
We study optimal charging of large electric vehicle populations that are exposed to a single real-time \emph{imbalance price}. The problem is naturally cast as a multistage stochastic linear programme (MSLP), which can be solved by algorithms such as Stochastic Dual Dynamic Programming. However, these methods scale poorly with the number of devices and stages.
This paper presents a novel approach to overcome this curse of dimensionality. Building prior work that characterises the aggregate flexibility sets of populations of EVs as a \emph{permutahdron}, we reformulate the original problem in terms of aggregated quantities.
The geometric structure of permutahedra lets us (i) construct an optimal disaggregation policy, (ii) derive an exact, lower-dimensional MSLP, and (iii) characterise the expected recourse function as piecewise affine with a \emph{finite, explicit} partition. In particular, we provide closed-form expressions for the slopes and intercepts of each affine region via truncated expectations of future prices, yielding an exact form for the recourse function and first-stage policy. 
Comprehensive numerical studies validate our claims and demonstrate the practical utility of this work.
\end{abstract}


\section{Introduction}
Distributed energy resources (DERs), such as electric vehicles, batteries, and energy storage systems, can be used as a way to increase the resilience of power grids with high penetration of renewable generation \cite{Lund2015ReviewElectricity}.
DERs can serve as a source of flexibility, to mitigate the intermittency of renewable resources, offering active and reactive power to help balance supply and demand in the power grid. Without these such flexible resources, the frequency and voltage of the entire system would violate admissible regions, thus undermining the well-functioning of the entire grid.
Populations of DERs can be used as cheap source of flexibility on the grid. Hence, to cope with the intermittency of energy production, Transmission System Operators (TSOs) are going to great lengths to encourage the participation of DERs in reserve markets.

To capitalize on the flexibility available in electric vehicles, aggregators play a crucial role for two reasons: (1) from the consumer perspective, they provide a means for EV owners to participate in the balancing of the power grid; (2) from a TSO perspective, due to the strong law of large numbers \cite{Vandael2015ReinforcementMarket}, the variance of the aggregated population decreases as the number of EVs under control increases. By bidding into reserve markets, aggregators expose themselves to imbalance prices and other penalties imposed by TSOs if they fail to deliver the committed capacity. Reaching a trade-off between such penalties and the revenue acquired by engaging in reserve markets constitutes a challenge aggregators must embrace. In this context, the decision problem faced by the aggregator is that of optimizing the consumption of each device in its population while maximizing profit and managing the risk.

Adjusting the charging rate of each EV based on the predicted imbalance price and revenue allows aggregators to maximize their revenue while controlling risks. A common formulation of this decision problem is as a multistage stochastic linear programme (MSLP), and a common algorithm to obtain an approximation of such a decision problem is stochastic dual dynamic programming (SDDP). SDDP produces a lower bound on the optimal recourse function and chooses actions that maximise such a minorant function \cite{Pereira1991Multi-stagePlanning}. However, a key limitation of SDDP is its poor scalability with respect to the dimensionality of the state space \cite{Fullner2025StochasticReview}. Applying SDDP to large-dimensional problems leads to insurmountable computational burden, thus limiting its applicability in real-world problems.

In this paper, we build on a corpus of work that focuses on characterising the aggregate flexibility set of a population of DERs \cite{Barot2017APolytopes}. Using these aggregation techniques we reformulate the MSLP in terms of a set aggregated states and decision variables. In the case of a population of EVs the aggregated flexibility set is a type of polytope known as a \emph{permutahedron} \cite{Mukhi2023AnVehicles}. By exploiting the geometry of this class of polytope \cite{Postnikov2009PermutohedraBeyond}, we show how a transition function and stagewise feasible sets can be defined in this aggregated domain. 
 Then, using the results in \cite{Forcier2024ExactProblems}, we compute the regions of the aggregated decision space where the optimal recourse function is affine. This allows us to find an exact representation of the recourse function, by deriving closed-form solutions of the coefficients that define each affine region. 
 Overall, our main contributions can be summarised as follows:

\begin{itemize}
    \item[1.] We reformulate the stochastic programming problem faced by EV aggregators in terms of aggregated charging and state of charge of the population, showing how this lower-dimensional representation exhibits important geometric properties;
    \item[2.] We leverage existing results in combinatorics to characterise the regions of the aggregated state space where the optimal recourse function is affine, and we demonstrate how the parameters of these regions can be obtained from the original problem formulation;
    \item[3.] We validate our findings through numerical experiments, showing that our exact characterisations of the recourse function is sound and scalable.
\end{itemize}

This paper is organised as follows. Section II contains preliminary results on MSLPs, SDDP, and permutahedra. Section III presents the problem formulation, and Section IV the proposed reformulation in terms of the aggregated quantities. The characterization of the affine regions of the optimal recourse function is presented in Section V, and the numerical results are described in Section VI.




\section{Preliminaries}
This section gathers the preliminaries used throughout the paper. We first formalise \textit{multistage stochastic linear programs} (MSLPs). We then introduce their dynamic-programming decomposition and briefly recall \textit{stochastic dual dynamic programming} (SDDP) as a method of solving this class of problems. 
Finally, we define a family of polytopes known as \textit{permutahedra} and state some properties of these polytopes that will become relevant later.

\subsection{Multistage Stochastic Linear Problems}
Let $(\Omega,\mathcal{A},\mathbb{P})$ be a probability space, and let $\{c_t\}_{t=1}^T$ be a sequence of random variables. We denote by $\mathcal{F}_t$ the $\sigma$-algebra generated by the sequence $\sigma(c_1,\ldots,c_t)$. Given an initial condition, $x_0$, an MSLP can be formalised as follows:
\begin{equation*}
\begin{aligned}
v^*(x_0) = \min_{(u_t)_{t=1}^T}\quad
& c_1^\top u_1 \;+\; \mathbb{E}\!\left[\sum_{t=2}^T c_t u_t \right]\\[2pt]
\text{s.t.}\quad
& u_t \in \mathcal{U}_t(x_{t-1})\\[2pt]
& x_t = T_t(x_{t-1},u_t) \qquad  \\[2pt]
& u_t \preceq \mathcal{F}_t \qquad && t \in \mathcal{T}.
\end{aligned}
\end{equation*}
Here, $u_t \in \mathbb{R}^{m_t}$ and $x_t \in \mathbb{R}^{n_t}$, denote the $t^{th}$ decision and state variables. The set $\mathcal{U}_t(x_{t-1})$ is the admissible action set given the incoming state, and the system evolves via the transition function $T_t(x_{t-1},u_t)$. The non-anticipativity condition $u_t \preceq \mathcal{F}_t$ enforces that $u_t$ is $\mathcal{F}_t$-measurable. The MSLP formulated above is a \textit{Hazard-Decision} problem, where the stage-t decision is made after observing the noise $c_t$ \cite{Dowson2020TheProblems}. To formulate it as a \textit{Decision-Hazard} problem, where the stage-t decision is made before observing $c_t$, one must modify the non-anticipativity constraint such that $u_t$ is measurable with respect to $\mathcal{F}_{t-1}$.

\begin{definition}
The stage-$t$ recourse function is defined recursively as:
\begin{equation*}
\begin{aligned}
    V_t(x_{t-1}, c_t) = \min_{u_t \in \mathbb{R}^{n_t}}\quad &c_t^\top u_t + \mathcal{V}_{t+1}(x_t) \\[4pt]
\text{s.t.}\quad
& u_t \in \mathcal{U}_t(x_{t-1}), \\[2pt]
& x_t = T(x_{t-1}, u_t), \\[2pt]
\end{aligned}
\end{equation*}
where the expected recourse function is given by
\begin{equation*} 
    \mathcal{V}_{t}(x_{t-1}) := \mathbb{E}_{c_t}\!\left[ V_t(x_{t-1}, c_t)  \right].
\end{equation*}
\end{definition}
\noindent
Using this, the MSLP may be written as
\begin{equation*}
\min_{u_1\in\mathbb{R}^{n_1}}\big\{\,c_1^\top u_1+\mathcal V_{2}(x_1) : u_1\in\mathcal U_1(x_0),\ x_1=T(x_0,u_1)\,\big\},
\end{equation*}
with the terminal condition $\mathcal{V}_{T+1}(x_T) = 0.$

\begin{figure}[t]
\centering
     \begin{tikzpicture}[scale=0.9]

  \def\xmin{0}\def\xmax{6}
  \def\ymin{-0}\def\ymax{6}

  \def\xA{1}
  \def\xB{3}
  \def\xC{5}

  \def\yA{2/3}
  \def\yB{2.5}
  \def\yC{45/8}

  \def\c{0}
  \def\mA{-2}\def\bA{5.0+\c}
  \def\mB{-.5}\def\bB{3.5+\c}
  \def\mC{.5}\def\bC{.5+\c}
  \def\mD{2}\def\bD{-7+\c}

  \def\d{0}
  \def\nA{-2.2}\def\cA{4.5+\d}
  \def\nB{-.7}\def\cB{3.5+\d}
  \def\nC{.5}\def\cC{.5+\d}
  \def\nD{1.3}\def\cD{-4+\d}

  \draw[->] (\xmin-0.5,\ymin) -- (\xmax+0.5,\ymin) node[midway, below=6pt] {$x_t$};
  \draw[->] (\xmin-0.5,\ymin) -- (\xmin-0.5,\ymax+0.2)  node[midway, rotate=90, above] {$\mathcal{V}_{t+1}(x_t)$};

  \coordinate (E0) at (\xmin,{\mA*\xmin+\bA});   
  \coordinate (E1) at (\xA,  {\mA*\xA+\bA});     
  \coordinate (E2) at (\xB,  {\mB*\xB+\bB});     
  \coordinate (E3) at (\xC,  {\mC*\xC+\bC});     
  \coordinate (E4) at (\xmax,{\mD*\xmax+\bD});   

    ===== TRUE convex PWA function as a polyline =====
  \coordinate (D0) at (\xmin, {\nA*\xmin+\cA});   
  \coordinate (D1) at (\yA,   {\nA*\yA+\cA});     
  \coordinate (D2) at (\yB,   {\nB*\yB+\cB});     
  \coordinate (D3) at (\yC,   {\nC*\yC+\cC});     
  \coordinate (D4) at (\xmax, {\nD*\xmax+\cD});   

   \coordinate (D1A) at (\xmin, \nB*\xmin + \cB);
    \coordinate (D1B) at ({(\ymin - \cA)/\nA}, \ymin);

   \coordinate (D2A) at (\xmin, \nC*\xmin + \cC);
    \coordinate (D2B) at ({(\ymin - \cB)/\nB}, \ymin);

    \coordinate (D3B) at ({(\ymin - \cD)/\nD}, \ymin);

   \coordinate (D4A) at (\xmax, \nC*\xmax + \cC);


    \draw[paperVermilion, dashed, opacity=0.5, line width=0.3pt] (D1A) -- (D1);
    \draw[paperVermilion, dashed, opacity=0.5, line width=0.3pt] (D1B) -- (D1);
    \draw[paperVermilion, dashed, opacity=0.5, line width=0.3pt] (D2A) -- (D2);
    \draw[paperVermilion, dashed, opacity=0.5, line width=0.3pt] (D2B) -- (D2);
    \draw[paperVermilion, dashed, opacity=0.5, line width=0.3pt] (D3B) -- (D3);
    \draw[paperVermilion, dashed, opacity=0.5, line width=0.3pt] (D4A) -- (D3);


  \draw[paperVermilion, line width=.7pt] (D0) -- (D1) -- (D2) -- (D3) -- (D4);

  \draw[paperBlue, line width=.7pt] (E0) -- (E1) -- (E2);

 \draw[paperBlue, line width=.7pt] (E2) -- (E3);
    
  \draw[paperBlue, line width=.7pt] (E3) -- (E4);

  \foreach \P in {E0,E1,E2,E3,E4}{
    \fill[paperBlue] (\P) circle (1.5pt);
}
\foreach \x in {\xmin, \xA, \xB, \xC, \xmax}{
    \draw[paperBlue, line width=0.7pt, cap=round]
        (\x, \ymin+0) -- (\x, \ymin+0.1);
}

  \foreach \P in {D0,D1,D2,D3,D4}{
    \fill[paperVermilion] (\P) circle (1.5pt);
}

\foreach \x in {\xmin, \yA, \yB, \yC, \xmax}{
    \draw[paperVermilion, line width=0.7pt, cap=round]
        (\x, \ymin) -- (\x, \ymin - 0.1);
}
\draw[densely dotted, paperBlue!60] (E0) --  (\xmin, \ymin);
\draw[densely dotted, paperBlue!60] (E1) --  (\xA, \ymin);
\draw[densely dotted, paperBlue!60] (E2) --  (\xB, \ymin);
\draw[densely dotted, paperBlue!60] (E3) --  (\xC, \ymin);
\draw[densely dotted, paperBlue!60] (E4) --  (\xmax, \ymin);

\draw[densely dotted, red!60] (D0) --  (\xmin, \ymin);
\draw[densely dotted, red!60] (D1) --  (\yA, \ymin);
\draw[densely dotted, red!60] (D2) --  (\yB, \ymin);
\draw[densely dotted, red!60] (D3) --  (\yC, \ymin);
\draw[densely dotted, red!60] (D4) --  (\xmax, \ymin);

\def\offset{0.02}
\foreach \strt/\ennd in {
    \xmin/\xA,
    \xA/\xB,
    \xB/\xC,
    \xC/\xmax,
}{
  \draw[paperBlue, line width=1pt]
    (\strt,\ymin+\offset) -- (\ennd,\ymin+\offset);
}

\foreach \strt/\ennd in {
    \xmin/\yA,
    \yA/\yB,
    \yB/\yC,
    \yC/\xmax,
}{
  \draw[paperVermilion, line width=1pt]
    (\strt,\ymin-\offset) -- (\ennd,\ymin-\offset);
}

\end{tikzpicture}
\caption{Exact expected recourse \(\mathcal V_{t+1}(x_t)\) (solid blue) is piecewise-affine and convex. The orange curve is a polyhedral \emph{minorant} built from a finite set of cuts as in SDDP (dashed lines show individual cuts); vertical guides mark the breakpoints of the affine regions. This paper identifies these regions and their coefficients in closed form, enabling direct construction of the exact value function.}

\label{fig:pwa-convex}
\end{figure}
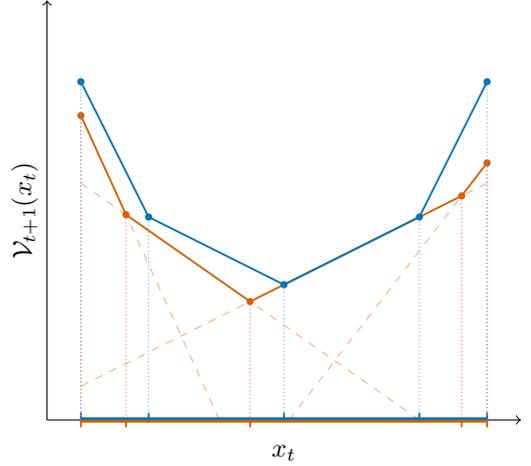

\subsection{Stochastic Dual Dynamic Programming}
Here we provide a very brief overview of SDDP, emphasizing aspects most relevant to this work. The reader is directed to \cite{Fullner2025StochasticReview} for a full review of this subject. 
SDDP provides a method of approximating the expected recourse function, and uses these approximations to solve the MSLP.
As formulated above, the expected recourse function $\mathcal{V}_{t+1}(x_t)$ is piecewise affine \cite[Theorem 2.8]{Fullner2025StochasticReview}, and so 
$\mathcal{V}_{t+1}(x_t)$ can be written as the maximum of a set of piecewise affine functions:
\begin{equation*}
     \mathcal{V}_{t+1}(x_t) = \max \left\{ \alpha_{t,k} + \sum_{i=1}^{n_t}\beta^i_{t, k} \;x^i_t \; \middle\vert \; k \in \mathcal{K}\right\},
\end{equation*}
SDDP constructs a sequence of polyhedral under-estimators of the expected recourse functions. In each iteration, a forward pass samples scenarios and generates trial states under the current approximation of the recourse function. A backward pass then solves the stage subproblems at those states and uses their dual variables to add supporting hyperplanes (“cuts”) to $\mathcal{V}_t$. Collectively, these cuts yield a lower bound that is valid for all states and nondecreasing across iterations; as more cuts are added, the approximation tightens and converges to the true expected recourse function \cite{Pereira1991Multi-stagePlanning}.

\begin{remark}\label{rem:affine_region}
    There exists a finite partitioning of the domain of $\mathcal{V}_{t+1}(x_t)$, denoted as $\mathcal{C} =\{\mathcal{R}_1, ..., \mathcal{R}_K|  \mathcal{R}_k \subseteq \mathbb{R}^{n_t} \;\; \forall k\}$, such that $\mathcal{V}_{t+1}(x_t)$ affine on each region of this partitioning. That is  $\forall k$ there exists $\alpha_{t, k}$ and $\beta_{t,k} \in \mathbb{R}^{n_t}$ such that $\forall x_t \in \mathcal{R}_k$
    \begin{equation*}
        \mathcal{V}_{t+1}(x_t) = \alpha_{t,k} + \sum_{i=1}^{n_t}\beta^i_{t, k} \;x^i_t ,
    \end{equation*}
    as shown in \cref{fig:pwa-convex}.
    The purpose of this paper is to find such a partitioning for the problem of a population of EVs exposed to stochastic imbalance prices. 
\end{remark}

\subsection{Permutahedra} 
Let $v \in \mathbb{R}^n$, such that $v^{i} \leq v^{i+1} \; \forall i = 1,...,n$. The \textit{permutahedra} generated by $v$, denoted $\Pi(v)$ is the convex hull of all coordinate permutations of $v$, that is \cite{Dahl2010Majorization01-matrices}:
\begin{equation*}   
    \Pi(v):=\mathrm{ConvHull}\{v^{\pi}: \pi \in S_n\},
\end{equation*}
where $v_{\pi}$ denotes the vector obtained by permuting the entries of $v$ according to $\pi$, and $S_n$ is the symmetric group on $n$ \cite{Rotman1995SymmetricG-Sets}. We briefly summarise some relevant properties of permutahedra below. For a comprehensive treatment, the reader is referred to \cite{Postnikov2009PermutohedraBeyond}.
Permutahedra are closed under Minkowski summation: the sum of two permutahedra is simply the permutahedron of the sum of their generators:
\begin{equation}\label{eq:perm_mink}
\Pi(v_a) \oplus \Pi(v_b) \;=\; \Pi(v_a + v_b).
\end{equation}
Given generators $v_a$ and $v_b$, $ \Pi(v_a)$ is a subset of $ \Pi(v_b)$, if and only if $v_b$ \textit{majorises} $v_a$, that is:

\begin{equation}\label{eq:majorization}
\Pi(v_a)\subseteq \Pi(v_b)
\;\Longleftrightarrow\; v_a \preceq v_b.
\end{equation}
Because of the symmetry arising from their construction, LPs over permutahedra admit an efficient solution:
\begin{equation}\label{eq:perm_lp}
    \sum_{i=1}^n w^{\pi^*(i)} v^i = \min\{w^\top u \mid u \in \Pi(v)\}
\end{equation}
where $\pi^*$ denotes the permutation that arranges the elements of $w$ in descending order: $  w^{\pi^*(i)} \geq  w^{\pi^*(i+1)}$ for all $i$. The optimal solution is $u^* = v^{\pi^*}$, given by permuting elements of $v$ using this same optimal permutation. Essentially solving an LP over $\Pi(v)$ reduces to sorting the objective coefficients.
Finally, the projection of a permutahedron onto any of its coordinates is given by the interval:
\begin{equation}\label{eq:perm_proj}
    \proj_{\{i\}}\left(\Pi(v)\right) = [v^1, v^{n}] \quad \forall i \in [n].
\end{equation}


\section{Problem Formulation}
In this section, we present our EV charging model and formulate the optimisation problem for a population of EVs exposed to stochastic imbalance prices.
We consider an aggregator with direct control over the power consumption of a finite population of devices indexed by
$l \in \mathcal{N}:=\{1,\ldots,N\}$.
The planning horizon is finite and discretised into $T$ stages of duration $\delta$.
We denote the set of stages by $\mathcal{T}:=\{1,\ldots,T\}$ and use $t\in\mathcal{T}$ to index a stage. Finally, for notational convenience we define $n_t = T - t + 1$.

\subsection{EV Charging Model}
Let $u_{t,l}$ denote the power consumption of the $l^{th}$ EV in stage $t$. Each EV will have a lower and upper bound on its power consumption, denoted $\underline{u}_l$ and $\overline{u}_l$ such that for all $i$ and $t$: $\underline{u}_l \leq u_{t,l} \leq \overline{u}_{i}$.
For the purposes of this work we assume $0 \leq \underline{u}_l$, i.e. EVs can only charge with no bidirectional charging capabilities. Although restrictive, this assumption reflects current practice, many vehicles are not V2G-enabled and many owners prefer not to export, so the formulation remains practically relevant.
Note, EVs with V2G enabled are treated in \cite{Mukhi2025ExactPolymatroids}, using geometric structures related to the one considered in this paper.

We let $x_{t,l}$ denote the remaining required energy demand for the EV at the end of stage $t$, where $x_{0,l}$ represents the device's initial energy requirement. For simplicity, and without loss of generality, in the following we assume $\underline{u}_l = 0$. In each stage, the remaining required energy demand decrements with the power consumption and the dynamics of the system are: 
\begin{equation}\label{eq:single_ev_transition}
    x_{t,l} = x_{t-1, l} - u_{t,l} \;\; \forall t,
\end{equation}
where we assume the duration of each stage $\delta = 1$. Without loss of generality, constant charging inefficiencies can be absorbed into the definition of $x_{t,l}$. The boundary condition $x_{t,l} = 0$ ensures the device is fully charged at the end of the horizon. 
Using this charging model, we may consider the set of feasible charging profiles for the device.
\begin{definition}[Individual recourse set]
For stage $t\in\mathcal T$, and device $l\in\mathcal N$, the \emph{individual recourse set}, denoted $\mathcal{P}_{t,l}(x_{t-1, l})$, is the set of all feasible \emph{future} charging profiles:
\begin{equation*}
     \mathcal{P}_{t,l}(x_{t-1, l}) :=  \left\{ u_l \in \mathbb{R}^{n_t} \middle\vert \;\;
    \begin{array}{@{}cl}
        0 \leq u_{s,l} \leq \overline{u}_l \;\; \forall s \in [n_t]\\
        \sum_{s=1}^{n_t} u_{s,l} = x_{t-1, l}
    \end{array} 
    \right\}
\end{equation*}
\end{definition}
By construction, the individual recourse sets are defined by a set of bounded linear constraints; and so the $\mathcal{P}_{t,l}(x_{t-1,l})$ are polytopes. In particular, $\mathcal{P}_{1,l}(x_{0,l})$ denotes the flexibility set for the device over the entire horizon, this is exactly the characterisation defined in \cite{Mukhi2023AnVehicles} and \cite{Panda2024EfficientVehicles}. It is convenient to define $q_{t,l} = \left\lfloor  x_{t-1, l}/\overline{u}_l \right\rfloor$, which represents the maximum number of remaining stages during which device can charge at its rated power.

\begin{lemma}\label{thm:perm_is_flex}\cite[Theorem 1]{Mukhi2023AnVehicles}
    The individual recourse set $\mathcal{P}_{t,l}(x_{t-1, l})$, is the permutahedra $\Pi(v_{t-1,l})$, where $v_{t-1,l} \in \mathbb{R}^{n_t}$ and 
    \begin{equation}\label{eq:vertex_major}
    v_{t-1,l}  := ( \underbrace{0, ...,\; 0}_{n_t  - q_{t,l}  - 1}, \;x_{t-1, l} - q_{t,l} \; \overline{u}_l , \;\underbrace{\overline{u}_l ,...,\overline{u}_l }_{q_{t,l}} ).
\end{equation}
\end{lemma}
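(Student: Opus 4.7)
The plan is to prove the equality $\mathcal{P}_{t,l}(x_{t-1,l}) = \Pi(v_{t-1,l})$ by mutual inclusion, exploiting that both sets are bounded polytopes lying in the affine hyperplane $\{u \in \mathbb{R}^{n_t} : \sum_s u_{s,l} = x_{t-1,l}\}$. The strategy is to check the easy direction by direct substitution, and to establish the converse via a vertex characterisation of $\mathcal{P}_{t,l}(x_{t-1,l})$.

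For the inclusion $\Pi(v_{t-1,l}) \subseteq \mathcal{P}_{t,l}(x_{t-1,l})$, I would verify directly that every coordinate permutation of $v_{t-1,l}$ satisfies the box and sum constraints defining $\mathcal{P}_{t,l}(x_{t-1,l})$. The box constraints hold because $q_{t,l} = \lfloor x_{t-1,l}/\overline{u}_l \rfloor$ guarantees $0 \leq x_{t-1,l} - q_{t,l}\,\overline{u}_l < \overline{u}_l$, while the remaining entries equal $0$ or $\overline{u}_l$; the sum of the entries equals $q_{t,l}\,\overline{u}_l + (x_{t-1,l} - q_{t,l}\,\overline{u}_l) = x_{t-1,l}$. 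Since $\mathcal{P}_{t,l}(x_{t-1,l})$ is convex, it contains the convex hull of these permutations, which is $\Pi(v_{t-1,l})$.

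For the reverse inclusion, I would characterise the vertices of $\mathcal{P}_{t,l}(x_{t-1,l})$. As this polytope lives in an affine subspace of dimension $n_t - 1$, each vertex must activate at least $n_t - 1$ of the $2n_t$ box inequalities, which implies that at most one coordinate lies strictly between $0$ and $\overline{u}_l$ while all others take values in $\{0, \overline{u}_l\}$. Writing $k$ for the number of coordinates equal to $\overline{u}_l$, the sum constraint pins the remaining coordinate to $x_{t-1,l} - k\,\overline{u}_l$, which must itself lie in $[0, \overline{u}_l]$, forcing $k = q_{t,l}$. Every vertex is therefore a permutation of $v_{t-1,l}$, and since a bounded polytope equals the convex hull of its vertices, $\mathcal{P}_{t,l}(x_{t-1,l}) \subseteq \mathrm{ConvHull}\{v_{t-1,l}^{\pi} : \pi \in S_{n_t}\} = \Pi(v_{t-1,l})$.

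The main technical point is the vertex analysis: one must check that the $n_t - 1$ active box constraints together with the sum constraint are linearly independent, which follows because the active box constraints involve distinct coordinate directions. A minor edge case arises when $x_{t-1,l}$ is an integer multiple of $\overline{u}_l$, so that $x_{t-1,l} - q_{t,l}\,\overline{u}_l = 0$ and $v_{t-1,l}$ has an additional zero entry; the argument above still applies, and the convex-hull description remains valid. Since the result is cited from prior work, this proof sketch can be kept concise.
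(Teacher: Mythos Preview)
Your argument is correct: the forward inclusion by direct verification and the reverse inclusion via the vertex characterisation of the capacitated simplex are both sound, and your handling of the linear-independence check and the integer-multiple edge case is fine. Note, however, that the paper does not supply its own proof of this lemma; it simply cites \cite[Theorem~1]{Mukhi2023AnVehicles}, so there is no in-paper argument to compare against---your sketch would serve well as a self-contained justification in lieu of the citation.
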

Finally, the feasible instantaneous power for the device at stage $t$ is the projection of its recourse polytope onto the $t$-th coordinate: 
\begin{equation}\label{eq:single_ev_admissible}
    \mathcal{U}_t(x_{t-1,i}) = \proj_{\{t\}}\big(\mathcal{P}_{t,l}(x_{t-1, l})\big) = [v_{t-1,l}^1, v_{t-1,l}^{n_t}],
\end{equation}
where we have used the identity in \eqref{eq:perm_proj}. This characterises the feasible values of $u_{t,l}$ given $x_{t-1, l}$, and enforces the relatively complete recourse constraints required for SDDP \cite{Fullner2025StochasticReview}.

\subsection{Imbalance Market MSLP}

We formalise the online energy consumption problem faced by an aggregator participating in the real-time imbalance market under a single system price. Specifically, we consider a setting in which the entire population of EVs managed by the aggregator is uniformly exposed to a single imbalance price at each market interval, the aggregator must minimise the cost of charging the population whilst ensuring all EVs are fully charged at the end of the horizon.
 The price process $\{c_t\}$ is stage-wise independent, where stage-dependent marginals are derived from imbalance price forecasts such as \cite{Ganesh2024ForecastingNetworks}. Stage-wise independence of the imbalance price is a strong assumption, we adopt it for simplicity. However, it may be relaxed without altering the core contributions of this work.

At stage $t$, the aggregator observes the state of each device, $\bm{x}_{t-1} = (x_{t-1,l})_{l\in\mathcal N}$, and the current system price, $c_t$, and chooses a set of dispatch actions, $(u_{t,l})_{l\in\mathcal N}$, for each EV based on this. This can be formalised using the following MSLP
\begin{equation*}
\begin{aligned}
v^*(\bm{x}_0) = \min_{[u_{t,l}]^{l \in \mathcal{N}}_{t \in \mathcal{T}}}\quad
& c_1 \sum_{l \in \mathcal{N}}u_{1,l} \;+\; \mathbb{E}\!\left[\sum_{t=2}^T c_t \sum_{l \in \mathcal{N}}u_{t,l}  \right]\\[4pt]
\text{s.t.}\quad
& u_{t,l} \in [v_{t-1,l}^1, v_{t-1,l}^{n_t}], \\[2pt]
& x_{t,l} = x_{t-1, l} - u_{t,l} \\
& u_{t,l} \preceq \mathcal{F}_t   \quad\qquad\qquad  t=1,\ldots,T, \;\; \forall i.
\end{aligned}
\end{equation*}
Here, we use the admissible action set from \eqref{eq:single_ev_admissible} and the transition function from \eqref{eq:single_ev_transition}.
One could, in principle, apply SDDP to this formulation. However, there are $N$ control variables per stage, and SDDP scales poorly with the number of decision variables; as $N$ grows the subproblem and cut-generation burden explodes, rendering the application to large populations computationally intractable. Because each EV is exposed to the same market price, the objective depends only on aggregate power. This motivates an aggregated reformulation that collapses the $N$-dimensional control to a single aggregate action.

\section{Aggregated Reformulation}
In this section, we recast the MSLP in aggregate coordinates.  We first introduce a set of aggregate state variables and an aggregate charging decision that exactly characterise the set of admissible aggregate charging profiles.
We provide a policy that disaggregates an aggregate power consumption into 
into a set of admissible dispatch actions for each EV in the population. This policy induces a transition function for the aggregate state, the explicit form of which is derived below. Together with the associated admissible action set and transition, we rewrite the MSLP over the aggregated state and decision, yielding an equivalent but lower-dimensional MSLP.

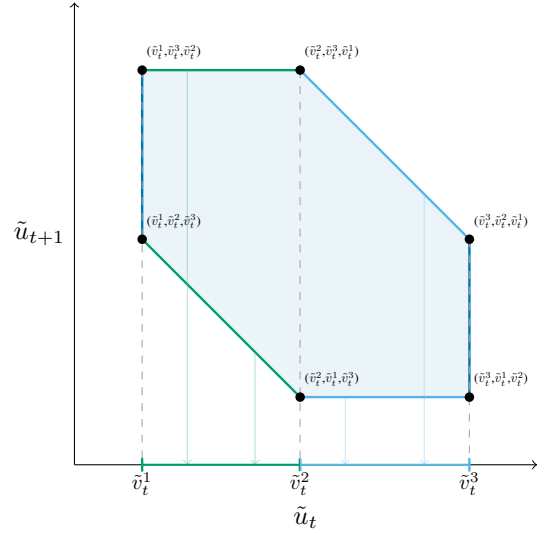
\begin{figure}[t]
    \centering
     \def\va{1.1}   
\def\vb{2.5}   
\def\vc{4}   

\def\na{$\tilde{v}^1_t$}
\def\nb{$\tilde{v}^2_t$}
\def\nc{$\tilde{v}^3_t$}

\begin{tikzpicture}[scale=1.5]

  \pgfmathsetmacro{\xmin}{min(\va,min(\vb,\vc))}
  \pgfmathsetmacro{\xmax}{max(\va,max(\vb,\vc))}
  \pgfmathsetmacro{\ymin}{\xmin}
  \pgfmathsetmacro{\ymax}{\xmax}
  \pgfmathsetmacro{\marg}{0.6}

  \draw[->] (0.5,0.5) -- (4.6,0.5) node[midway,  yshift=-20] {$\tilde{u}_t$};
  \draw[->] (0.5,0.5) -- (0.5,4.6) node[midway, left]  {$\tilde{u}_{t+1}$};


  \foreach \t in {\va,\vb,\vc} {
  
    \draw[gray!70]  ( \t,0.45) -- ( \t,0.55) node[gray, below=4pt] {};
  }

  \coordinate (Pabc) at (\va,\vb); 
  \coordinate (Pbac) at (\vb,\va); 
  \coordinate (Pbca) at (\vb,\vc); 
  \coordinate (Pcba) at (\vc,\vb); 
  \coordinate (Pcab) at (\vc,\va); 
  \coordinate (Pacb) at (\va,\vc); 

  \fill[paperBlue!8]  (Pbac) -- (Pcab) -- (Pcba) -- (Pbca) -- (Pacb) -- (Pabc) -- cycle;
  \draw[paperSky, line width=0.9pt](Pbac) -- (Pcab);
  \draw[paperBlue, line width=0.9pt](Pcab) -- (Pcba);
  \draw[paperSky, line width=0.9pt](Pcba) -- (Pbca);
  \draw[paperGreen, line width=0.9pt](Pbca) -- (Pacb);
  \draw[paperBlue, line width=0.9pt](Pacb) -- (Pabc);
  \draw[paperGreen, line width=0.9pt](Pabc) -- (Pbac);

\def\offset{0.4} 

  \draw[->, opacity=0.45, paperGreen!70] (\va + \offset, \vc) -- (\va + \offset ,0.5) node[black, below, scale=0.8]{};
  \draw[->, opacity=0.45, paperGreen!70] (\vb - \offset, \va + \offset) -- (\vb - \offset ,0.5) node[black, below, scale=0.8]{};

  \draw[->, opacity=0.45, paperSky!70] (\vb + \offset, \va) -- (\vb + \offset ,0.5) node[black, below, scale=0.8]{};
  \draw[->, opacity=0.45, paperSky!70] (\vc - \offset, \vb +\offset) -- (\vc - \offset ,0.5) node[black, below, scale=0.8]{};

  \draw[dashed,gray!70] (\va , \vc) -- (\va,0.5) node[black, below, scale=0.8] {\na};
  \draw[dashed,gray!70] (\vb, \vc) -- (\vb,0.5) node[black, below, scale=0.8] {\nb};
  \draw[dashed,gray!70] (\vc, \vb) -- (\vc,0.5) node[black, below, scale=0.8] {\nc};

  \draw[thick, paperGreen] (\va, 0.5) -- (\vb,0.5) node[midway, below, scale=0.8] {};
  \draw[thick, paperSky] (\vb, 0.5) -- (\vc,0.5) node[below, scale=0.8] {};
  

  \foreach \p/\name/\xshift/\yshift in {
      Pabc/{(\na,\nb,\nc)}/12/7,
      Pbac/{(\nb,\na,\nc)}/12/7,
      Pbca/{(\nb,\nc,\na)}/12/7,
      Pcba/{(\nc,\nb,\na)}/12/7,
      Pcab/{(\nc,\na,\nb)}/12/7,
      Pacb/{(\na,\nc,\nb)}/12/7}
  {
    \fill (\p) circle (1.2pt);
    \node[xshift=\xshift, yshift=\yshift, scale=0.5] at (\p) {\name};
  }

\draw[paperGreen, line width=0.9pt, cap=round]
  ($(\va,0.5)-(0,1.5pt)$) -- ($(\va,0.5)+(0,1.5pt)$);
  \draw[paperGreen, line width=0.5pt, cap=round]
  ($(\vb-0.007,0.5)-(0,1.5pt)$) -- ($(\vb-0.007,0.5)+(0,1.5pt)$);
\draw[paperSky, line width=0.5pt, cap=round]
  ($(\vb+0.007,0.5)-(0,1.5pt)$) -- ($(\vb++0.007,0.5)+(0,1.5pt)$);
  \draw[paperSky, line width=0.9pt, cap=round]
  ($(\vc,0.5)-(0,1.5pt)$) -- ($(\vc,0.5)+(0,1.5pt)$);

\end{tikzpicture}
    \caption{The aggregate recourse set at stage $t$ is the permutahedron $\tilde{\mathcal P}_t=\Pi(\tilde v_t)$. 
For illustration we show its projection onto the $(\tilde u_t,\tilde u_{t+1})$-plane (the full set lives in $\mathbb{R}^3$ when $n_t=3$). 
The admissible first-stage action is $\tilde u_t\in[\tilde v_t^{\,1},\tilde v_t^{\,3}]$; the breakpoints 
$\tilde v_t^{\,1}<\tilde v_t^{\,2}<\tilde v_t^{\,3}$ induce the intervals 
$[\tilde v_t^{\,1},\tilde v_t^{\,2}]$ and $[\tilde v_t^{\,2},\tilde v_t^{\,3}]$ on which the recourse function is affine. 
These intervals are precisely the projections of facets of $\Pi(\tilde v_t)$ onto the $\tilde u_t$ axis.
}
    \label{fig:perm_proj}
\end{figure}

\subsection{Aggregate Charging Model}
Let $\tilde{u}$ denote the aggregate power consumption of the population of devices in stage $t$, such that $\tilde{u}_t := \sum_{l \in \mathcal{N}} u_{t,l}$.
The set of feasible aggregate future charging profiles for the population is formalised in the following definition.
\begin{definition}[Aggregate Recourse Set]
Given the state of charge of each device $\bm{x}_t$, for stage $t\in\mathcal T$, the \emph{aggregate recourse set}, denoted $\tilde{\mathcal{P}}_t$, is the set of all feasible \emph{future aggregate} charging profiles:
\begin{equation*}
     \tilde{\mathcal{P}}_t(\bm{x}_{t-1}) :=  \left\{ \tilde{u} = \sum_{l \in \mathcal{N}} u_{t,l} \in \mathbb{R}^{n_t} \middle\vert \;\; u_l \in \mathcal{P}_{t,l}(x_{t-1, l})
    \right\}.
\end{equation*}
\end{definition}
\noindent
By definition, $\tilde{\mathcal{P}}_t$ is the Minkowski sum of the $\mathcal{P}_{t,l}(x_{t-1, l})$. Using the Minkwoski sum result of \eqref{eq:perm_mink} and the characterization of $\mathcal{P}_{t,l}(x_{t-1, l})$ as a permutahedra from \cref{thm:perm_is_flex}, and defining 
\begin{equation}\label{eq:aggregated_state_def}
    \tilde{v}_t(\bm{x}_t) := \sum_{l \in \mathcal{N}} v_{t,l}(x_{t,l}),
\end{equation}
the aggregate recourse set is the permutahedra:
\begin{equation}\label{eq:agg_is_perm}
     \tilde{\mathcal{P}}_t(\bm{x}_{t-1})= \Pi(\tilde{v}_t).
\end{equation}
\noindent
As before, the feasible instantaneous power for the population at stage-$t$ is the projection of this onto $\{t\}$:
\begin{equation}\label{eq:agg_ev_admissible}
    \mathcal{U}_t(\bm{x}_{t-1}) = \proj_{\{t\}}\big(\tilde{\mathcal{P}}_t(\bm{x}_{t-1})\big) = [\tilde{v}_{t-1}^1, \tilde{v}_{t-1}^{n_t}].
\end{equation}
This allows us to succinctly characterise the state of the population in terms of aggregated variables $\tilde{v}_t$. For \(n_t=3\),  \cref{fig:perm_proj} visualises $\tilde{\mathcal{P}}_t$, showing the admissible aggregate action of \(\tilde u_t\in[\tilde v_t^{\,1},\tilde v_t^{\,3}]\).

\subsection{Optimal Disaggregation}
Now given an aggregate charging profile, the aggregator must decide on a policy that disaggregates this into individual actions for EVs in the population. That is given $\tilde{u}_t$, we must find a policy, $\sigma: \;\; \tilde{\mathcal{U}}(x_t) \rightarrow \mathcal{U}(x) \; \times ... \times \; \mathcal{U}(x_{t,N}) $, such that 
\begin{equation*}
    \sigma(\tilde{u}_t) = (u_{t,1}, ..., u_{t,N}), \quad \text{s.t.} \quad \tilde{u}_t = \sum_{l \in \mathcal{N}} u_{t,l}.
\end{equation*}
This disaggregation policy will update the individual states, and hence induce a transition function on the aggregate state variables via \eqref{eq:aggregated_state_def}. For an aggregated state $\tilde{v}_t$, we denote the transition function induced by $\sigma$ as $T_{\sigma}(\tilde{v}_{t-1}, \cdot)$.

\begin{lemma}\label{lemma:opt_diagg_policy}
    There exists an optimal disaggregation policy $\sigma^*$, such that $\forall \; \tilde{u}_t \in \mathcal{U}_t(\bm{x}_{t-1})$ and for all other policies $\sigma$:
    \begin{equation*}
        \mathcal{V}_t(T_{\sigma^*}(\tilde{v}_{t-1}, \tilde{u}_t)) \leq \mathcal{V}_t(T_{\sigma}(\tilde{v}_{t-1}, \tilde{u}_t)),
    \end{equation*}
    The transition function induced by $\sigma^*$ is given by 
    \begin{equation*}
        T^i_{\sigma^*}(\tilde{v}_{t-1}, \tilde{u}_t) = 
            \begin{cases}
                \tilde{v}^{i+1}_{t-1}                                      & \tilde{u}_t < \tilde{v}^{i+1}_{t-1} \\
                \tilde{v}^i_{t-1} + \tilde{v}^{i+1}_{t-1}  - \tilde{u}_t   & \tilde{v}^i_{t-1} \leq \tilde{u}_t \leq \tilde{v}^{i+1}_{t-1} \\
                \tilde{v}^i_{t-1}                                          & \tilde{u}_t > \tilde{v}^i_{t-1}
            \end{cases}
    \end{equation*}
\end{lemma}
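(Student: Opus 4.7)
The plan is to reduce the lemma to a comparison of candidate successor generators under majorization, and then to identify the stated formula as the majorization-maximum. Two observations enable this reduction. First, by the aggregated reformulation and \eqref{eq:agg_is_perm}, the feasible set at every subsequent stage is determined by the aggregate generator alone, so $\mathcal V_t$ may be viewed as a function of $\tilde v$ only. Second, by \eqref{eq:majorization}, whenever $\tilde v_a \preceq \tilde v_b$ one has $\Pi(\tilde v_a)\subseteq\Pi(\tilde v_b)$; a backward induction on the stage then gives $\mathcal V_t(\tilde v_b) \leq \mathcal V_t(\tilde v_a)$, since each future subproblem minimises over a larger feasible set. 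Consequently, proving the lemma reduces to showing that the successor generator induced by the stated $T_{\sigma^*}$ majorises that of any competing disaggregation.

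To justify the explicit form, observe that any feasible $\tilde u_t$ lies in a unique interval $[\tilde v_{t-1}^k,\tilde v_{t-1}^{k+1}]$. The policy $\sigma^*$ greedily assigns the largest per-device contributions at stage $t$, so the time-$t$ dispatch ``consumes'' exactly the $k$ smallest entries of $\tilde v_{t-1}$ together with a portion of the $(k+1)$-th; sum-preservation on $\Pi(\tilde v_{t-1})$ then forces the single residual entry to equal $\tilde v_{t-1}^k+\tilde v_{t-1}^{k+1}-\tilde u_t$, while the remaining entries of $\tilde v_{t-1}$ are inherited unchanged. A direct check that this residual lies in $[\tilde v_{t-1}^k,\tilde v_{t-1}^{k+1}]$ confirms that the new vector is still sorted in ascending order, and reading its entries off in position reproduces exactly the three-case formula of the statement.

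The main obstacle is the majorization step: establishing that every feasible disaggregation $\sigma$ yields an aggregate generator majorised by $T_{\sigma^*}(\tilde v_{t-1},\tilde u_t)$. I would handle this via a pairwise exchange argument on individual allocations: if two EVs deviate from the greedy ordering, swapping a unit of charge between their stage-$t$ actions strictly raises the sorted tail sums of $\sum_l v_{t,l}$ while preserving the total, pushing the aggregate generator upwards in the majorization order; iterating these exchanges converges to $\sigma^*$. A more compact alternative is to invoke \eqref{eq:perm_lp} directly: for every weight $w$, the greedy ordering maximises $w^\top \sum_l v_{t,l}$ over feasible disaggregations, which by a standard duality is equivalent to majorization dominance of the induced aggregate generator. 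Combined with the monotonicity of $\mathcal V_t$ from the first paragraph, this delivers the inequality in the statement.
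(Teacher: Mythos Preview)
Your proposal is correct and follows essentially the same approach as the paper: reduce optimality to a majorization comparison of the successor generators, invoke \eqref{eq:majorization} to translate majorization into containment of aggregate recourse sets, and conclude via monotonicity of $\mathcal V_t$ under enlargement of the feasible set. The paper's proof is terser than yours---it outsources the canonical greedy ordering and the majorization claim to an external reference (\cite{Evans2018RobustlyResources}, Lemma~III.5)---whereas you sketch the greedy construction of the transition formula and propose an exchange argument (or the LP characterisation \eqref{eq:perm_lp}) to establish majorization dominance, but the logical skeleton is the same.
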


\subsection{MSLP Reformulation}
With the aggregate variables and the optimal disaggregation map $\sigma^\star$, we can abstract away the details of the device-level decisions. Any feasible aggregate power consumption, $\tilde u_t$, admits an optimal disaggregation. Thus, under a single system price, the objective depends only on $\tilde u_t$. Hence, we can rewrite the MSLP in terms of these aggregate coordinates, with admissible set: $\mathcal U_t(\bm x_{t-1})=\proj_{\{t\}}\tilde{\mathcal P}_t(\bm x_{t-1})$, and dynamics given by $T_{\sigma^\star}$. The resulting problem is an equivalent but lower-dimensional MSLP.
With the boundary condition $\mathcal{V}_{T+1}(\tilde{v}_T) = 0$, the recourse function of this can be written as 
\begin{equation}\label{eq:agg_recourse}
    \begin{aligned}
    V_t(\tilde{v}_{t-1}, c_t) = \min_{\tilde{u}_t \in \mathbb{R}} \quad &c_t \tilde{u}_t + \mathcal{V}_{t+1}(\tilde{v}_t) \\[4pt]
\text{s.t.}\quad
& \tilde{u}_t \in [\tilde{v}^1_{t-1}, \tilde{v}^{n_{t-1}}_{t-1}], \\[2pt]
& \tilde{v}_t = T_{\sigma^\star}(\tilde{v}_{t-1}, \tilde{u}_t), \\[2pt]
\end{aligned}
\end{equation}

\section{Affine Regions of the recourse function}

As noted in \cref{rem:affine_region}, the domain of the recourse function admits a finite partition over which it is affine. In this section, we derive an explicit representation for these regions. We first establish that the recourse function is affine in the aggregate state variables $\tilde v$. Leveraging the transition map $T_{\sigma^\star}$ from the previous section, we then translate this result into a partition in the aggregate decision $\tilde u$ and give closed-form expressions for the corresponding affine coefficients (slopes and intercepts) on each region.

\begin{lemma}\label{lemma:affine_recourse}
    The recourse function is affine in $\tilde{v}_t$
    \begin{equation}\label{eq:affine_recourse}
        \mathcal{V}_{t+1}(\tilde{v}_{t}) = \sum_{i=1}^{n_{t}} w_{t}^i \tilde{v}_t^i
    \end{equation}
    where for $w_{t} \in \mathbb{R}^{n_t}$, and for all $i = 1,..., n_t$:

\begin{equation*}
    \begin{aligned}
        w_{t}^i  &= \mathbb{P}(c_{t+1} \leq w_{t+1}^i) w_{t+1}^i
                    + \mathbb{P}(w_{t+1}^{i-1} \leq c_{t+1}) w_{t+1}^{i-1}\\
             & \hspace{3pt} +\mathbb{P}( c_{t+1} \in [w_{t+1}^i, w_{t+1}^{i-1}])\,\mathbb{E}\left[\,c_{t+1} \mid  c_{t+1} \in [w_{t+1}^i, w_{t+1}^{i-1}]\right]
    \end{aligned}
\end{equation*}
    with terminal condition $w^1_T =  \mathbb{E}[c_T]$ where $w_T \in \mathbb{R}$.
\end{lemma}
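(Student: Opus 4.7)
The plan is to prove the lemma by backward induction on $t$, carrying as an invariant not just the claimed affine form but also the monotonicity $w_t^1 \geq w_t^2 \geq \cdots$. This monotonicity is essential: it is what guarantees that the inner Bellman minimisation is a convex piecewise-affine programme whose optimiser is attained at a breakpoint, which in turn enables the clean three-term expression for the $w_t^i$. The base case is immediate at $t = T$: since $n_T = 1$ the admissible set in \eqref{eq:agg_ev_admissible} is a singleton, forcing $\tilde{u}_T = \tilde{v}_{T-1}^1$, so $V_T(\tilde{v}_{T-1}, c_T) = c_T \tilde{v}_{T-1}^1$ and hence $\mathcal{V}_T(\tilde{v}_{T-1}) = \mathbb{E}[c_T]\tilde{v}_{T-1}^1$, matching the stated terminal condition (and trivially monotone).

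For the inductive step, I would substitute the affine hypothesis into \eqref{eq:agg_recourse} and analyse the inner minimisation using the closed form of $T_{\sigma^\star}$ from \cref{lemma:opt_diagg_policy}. On the sub-interval $\tilde{u}_{t+1} \in [\tilde{v}_t^k, \tilde{v}_t^{k+1}]$, only the $k$-th coordinate of the transition depends on $\tilde{u}_{t+1}$ (linearly, with slope $-1$), while the others shift along the sorted entries of $\tilde{v}_t$; hence $c_{t+1}\tilde{u}_{t+1} + \mathcal{V}_{t+2}(T_{\sigma^\star})$ is affine in $\tilde{u}_{t+1}$ on this interval with slope $c_{t+1} - w_{t+1}^k$. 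Under the inductive monotonicity assumption these slopes are non-decreasing in $k$, so the objective is convex piecewise-affine in $\tilde{u}_{t+1}$ and the minimiser is a breakpoint: $\tilde{u}_{t+1}^\star(c_{t+1}) = \tilde{v}_t^{j^\star}$, where $j^\star$ is the unique index satisfying $w_{t+1}^{j^\star-1} \geq c_{t+1} \geq w_{t+1}^{j^\star}$, adopting the conventions $w_{t+1}^0 = +\infty$ and $w_{t+1}^{n_{t+1}} = -\infty$ to absorb the boundary cases.

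Substituting back yields a closed form for $V_{t+1}(\tilde{v}_t, c_{t+1})$ that, on each event $\{w_{t+1}^i \leq c_{t+1} \leq w_{t+1}^{i-1}\}$, is linear in $\tilde{v}_t$ with coefficient on $\tilde{v}_t^j$ equal to $w_{t+1}^j$ for $j < i$, to $c_{t+1}$ for $j = i$, and to $w_{t+1}^{j-1}$ for $j > i$. Taking expectation over $c_{t+1}$ and collecting the coefficient of each $\tilde{v}_t^i$ across the three disjoint events $\{c_{t+1} \leq w_{t+1}^i\}$, $\{w_{t+1}^i \leq c_{t+1} \leq w_{t+1}^{i-1}\}$, and $\{c_{t+1} \geq w_{t+1}^{i-1}\}$ produces exactly the three terms in the claimed recursion. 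Monotonicity of $w_t$ is then confirmed by rewriting $w_t^i = \mathbb{E}\bigl[\max(w_{t+1}^i,\min(c_{t+1},w_{t+1}^{i-1}))\bigr]$; since the clipping windows $[w_{t+1}^i, w_{t+1}^{i-1}]$ shift downward with $i$, the expected clipped value is non-increasing in $i$, closing the induction.

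The main obstacle will be safeguarding the monotonicity invariant throughout the induction: without it, the inner objective is not convex, the minimiser need not lie at the expected breakpoint, and the three-part decomposition of the coefficients collapses. A secondary difficulty is the careful book-keeping at the boundary indices $i = 1$ and $i = n_{t+1}$, where the infinite conventions are needed so that the corresponding probabilities degenerate correctly and the recursion remains valid without special-casing.
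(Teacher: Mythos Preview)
Your proposal is correct and follows essentially the same route as the paper: backward induction carrying the monotonicity of $(w_t^i)_i$ as an invariant, use of the transition map from \cref{lemma:opt_diagg_policy} to obtain the per-interval slope $c_{t+1}-w_{t+1}^k$, identification of the optimiser at the breakpoint $\tilde v_t^{j^\star}$ under the conventions $w_{t+1}^0=+\infty$, $w_{t+1}^{n_{t+1}}=-\infty$, and expectation over the partitioning events to recover the three-term recursion. Your clipping representation $w_t^i=\mathbb E\bigl[\max(w_{t+1}^i,\min(c_{t+1},w_{t+1}^{i-1}))\bigr]$ is a clean way to verify the monotonicity step that the paper dispatches ``by inspection'', but otherwise the arguments coincide.
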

\begin{remark}
    Note how the structure of \eqref{eq:affine_recourse} is identical to the optimal value of solving an LP over a permutahedron, as reviewed in \eqref{eq:perm_lp}. By definition, the coefficients $w^i_t \geq w^{i+1}_t$ are nonincreasing. Intuitively, they may be understood as the ordered expected future cost, under the non-anticipativity constraints of the problem. They represent, the expected marginal cost of allocating one more unit of aggregate flexibility to $\tilde v_t^i$.
\end{remark}
\noindent
Using the transition map from \cref{lemma:opt_diagg_policy}, the recourse as a function of the aggregate decision is
\begin{equation}\label{eq:recourse_transition}
\mathcal{V}_{t+1}(\tilde u_t)
= \sum_{i=1}^{k} w_t^{i}\,\tilde v_{t-1}^{\,i}
  + \sum_{i=k}^{n_t} w_t^{i}\,\tilde v_{t-1}^{\,i+1}
  - w_t^{k}\,\tilde u_t,
\end{equation}
for all $\tilde{u}_t \in \left[\tilde v_{t-1}^k, \tilde v_{t-1}^{k+1}\right]$, for $k=1,\ldots,n_t-1$.
Thus $\mathcal{V}_{t+1}$ is piecewise affine in $\tilde u_t$ with breakpoints at $\{\tilde v_{t-1}^{\,i}\}_{i=1}^{n_t}$.
The following theorem formalises this partitioning and provides the closed-form coefficients.

\begin{theorem}\label{thm:affine_regions}
    Let $\mathcal{C}_t$ denote a partitioning of $\mathbb{R}$, such that
    \begin{equation*}
        \mathcal{C}_t =\left\{\left[\tilde v_{t-1}^k, \tilde v_{t-1}^{k+1}\right]\; \middle| \; k = 1,...,n_t-1\right\}
    \end{equation*}
    For all partitions $\mathcal{R}_k \in \mathcal{C}_t$, and for all $\tilde u_t \in \mathcal{R}_k $, the recourse function is affine and is given by
    \begin{equation*}
        \mathcal{V}_{t+1}(\tilde{u}_{t}) = \alpha_{t, k} + \beta_t^{k} \tilde u_t
    \end{equation*}
    \begin{equation*}
       \text{where} \quad \alpha_{t, k} = \sum_{i=1}^k w^i_t \tilde{v}^i_{t-1} + \sum_{i=k}^{n_t} w^i_t \tilde{v}^{i+1}_{t-1} \quad \text{and} \quad \beta_t^k = - w^k_t.
    \end{equation*}
\end{theorem}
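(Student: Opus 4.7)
The plan is to derive the theorem as a direct consequence of \cref{lemma:affine_recourse} and \cref{lemma:opt_diagg_policy}, with no new optimisation argument required: I would substitute the piecewise-linear transition $T_{\sigma^\star}$ into the affine expression $\mathcal{V}_{t+1}(\tilde v_t)=\sum_{i=1}^{n_t} w_t^i\,\tilde v_t^i$ from \eqref{eq:affine_recourse}, and then collect terms to read off the slope and intercept on each region.

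First, I would note that the admissible aggregate action set $[\tilde v_{t-1}^1,\tilde v_{t-1}^{n_t}]$ from \eqref{eq:agg_ev_admissible} decomposes as $\bigcup_{k=1}^{n_t-1}[\tilde v_{t-1}^k,\tilde v_{t-1}^{k+1}]$, so $\mathcal{C}_t$ is a valid partition of the feasible aggregate decisions (with agreement at shared breakpoints). Fixing $k$ and $\tilde u_t\in\mathcal R_k$, I would then exploit the sorted order $\tilde v_{t-1}^1\le\cdots\le\tilde v_{t-1}^{n_t}$ of the aggregate generator to identify, coordinate by coordinate, which branch of $T_{\sigma^\star}$ is active. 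For $i<k$ the chain $\tilde u_t\ge\tilde v_{t-1}^k\ge\tilde v_{t-1}^{i+1}$ selects the branch returning $\tilde v_{t-1}^i$; for $i>k$ the chain $\tilde u_t\le\tilde v_{t-1}^{k+1}\le\tilde v_{t-1}^i$ selects the branch returning $\tilde v_{t-1}^{i+1}$; for $i=k$ the middle branch yields $\tilde v_{t-1}^k+\tilde v_{t-1}^{k+1}-\tilde u_t$. Substituting these three cases into \eqref{eq:affine_recourse} and isolating the $\tilde u_t$ contribution reproduces exactly \eqref{eq:recourse_transition}, from which the slope $\beta_t^k=-w_t^k$ and intercept $\alpha_{t,k}$ can be read off directly.

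The main obstacle I anticipate is purely the case-analysis bookkeeping: ensuring the $i=k$ index is counted only once (the two sums defining $\alpha_{t,k}$ both include the $k$-th term by construction), that the sorted order of $\tilde v_{t-1}$ places every $i\neq k$ into a single unambiguous branch of $T_{\sigma^\star}$, and that the two adjacent affine pieces agree at each breakpoint $\tilde u_t=\tilde v_{t-1}^{k+1}$, which implicitly verifies continuity of $\mathcal{V}_{t+1}$ across $\mathcal{R}_k$ and $\mathcal{R}_{k+1}$. Since the statement is essentially an algebraic restatement of \eqref{eq:recourse_transition}, no further analytic machinery should be required beyond careful indexing of the partition.
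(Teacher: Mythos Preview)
Your proposal is correct and matches the paper's approach exactly: the paper's proof simply cites \cref{lemma:affine_recourse} together with \eqref{eq:recourse_transition}, and your case analysis on the branches of $T_{\sigma^\star}$ is precisely the derivation of \eqref{eq:recourse_transition} that the paper carries out in the text immediately preceding the theorem. The only difference is that you spell out the bookkeeping (the $i<k$, $i=k$, $i>k$ branch selection and the continuity check at breakpoints) more explicitly than the paper does.
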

\begin{proof}
    This follows directly from \cref{lemma:affine_recourse}, and the form of the recourse function from \eqref{eq:recourse_transition}.
\end{proof}

\begin{remark}
One can interpret the partitioning $\mathcal{C}_t$ as the chamber complex generated by projecting the faces of $\tilde{\mathcal{P}}_t$ onto $\tilde{u}_t$ \cite{Forcier2024ExactProblems}. Using its characterisation as a permutahedra from \eqref{eq:agg_is_perm}, the partitions are exactly the projection of the faces of $\tilde{\mathcal{P}}_t$ onto $\tilde{u}_t$ \cite{Frank2011ConnectionsOptimization}, this is depicted in \cref{fig:perm_proj}.
\end{remark}

\section{Numerical Results}
Here we numerically assess the proposed work. We first validate accuracy by comparing our exact recourse with SDDP approximations on a toy instance, then quantify runtime/convergence across horizons, and finally demonstrate operational use via large-scale price–response curves built from GB imbalance-price data.

\textit{1) Exact vs. SDDP Recourse Function Comparison.} 
We begin with a toy model of, $N=10$ EVs, and $T=5$ stages, to validate the theory. We synthesise a stage-dependent price distribution for $\{c_t\}$ and compute the \emph{exact} expected recourse by assembling the partitions and coefficients from \cref{thm:affine_regions}. Using the same initial state, we then run SDDP for 2 and 10 iterations and plot the resulting approximations on the recourse function.
\Cref{fig:value_function_comparison} shows the exact curve alongside the SDDP approximations. As expected, the SDDP recourse functions under-estimate the true recourse function, and tighten monotonically with iterations; by 10 iterations the approximation is nearly indistinguishable. In this toy setting several regions share identical slopes, so different cuts coincide; in larger instances such coincidences are rarer.

\begin{figure}[t]
    \centering
    \includegraphics[width=0.48\textwidth]{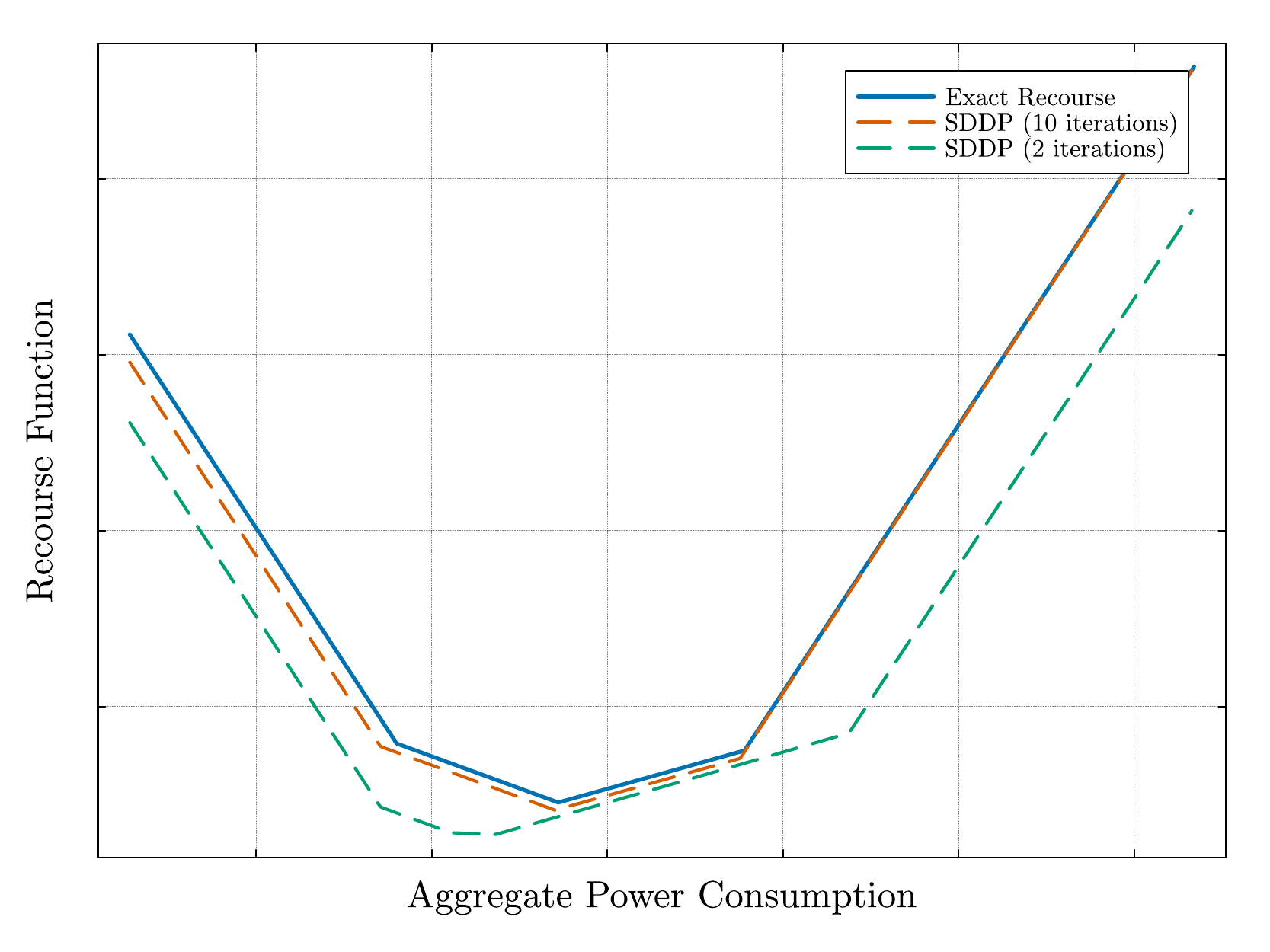}
    \caption{Exact expected recourse (solid blue) versus SDDP approximations for a toy instance with $N=10$ EVs and $T=5$ stages. The exact curve is computed from the analytic partition and coefficients of \cref{thm:affine_regions}; dashed curves show SDDP minorants after 2 and 10 iterations.}

    \label{fig:value_function_comparison}
\end{figure}
\textit{2) Runtime Benchmarking.} 
We benchmark our approach against SDDP on instances with $N=10$ EVs and horizons $T\in\{12,24,36,48\}$. After each SDDP iteration we record the wall time and the \emph{expected cost} obtained by simulating the policy induced by the current cut approximation. Costs are normalised by the exact value, and the resulting runtime curves are shown in \cref{fig:runtime_curve}. 
Our method attains the exact recourse after a single backward sweep, so it reaches the optimum essentially immediately. The gap in \cref{fig:runtime_curve} reflects this: across all $T$, SDDP improves monotonically but requires orders of magnitude more time to approach the normalised optimum. For visual clarity we plot the exact runtime curve only for $T=48$, the longest horizon. In these experiments we discretise each $c_t$ with 100 samples. While our construction can handle continuous distributions, and is largely insensitive to the number of samples. Since each backward pass solves subproblems for sampled realizations, SDDP’s runtime grows with both the horizon and the number of samples per stage.

\textit{3) Case Study: Price–Response Curves at Scale.} 
To illustrate the practical use of this work, we compute the optimal \emph{aggregate} consumption as a function of the observed system price. We consider a fleet of $50{,}000$ EVs and a horizon of $48$ settlement periods. Priors for future prices are built from historic GB system-price data (gathered from Elexon \cite{ElexonPortal}) by constructing stage-dependent empirical distributions for each settlement period.
For each observed first stage price $c_1$, we solve the first-stage problem in \eqref{eq:agg_recourse} with aggregate state $\tilde v_1$ to obtain $\tilde u_1(c_1)$. Sweeping over a range of $c_1$ yields the price–quantity curves in \cref{fig:cost_curve}.
To illustrate the diversity in potential cost curves, arising from different priors over future system price we construct two future  system price forecasts, one for system price in winter and summer months.
The winter curve lies above the summer curve across most prices, reflecting higher expected future prices in winter, and so it is optimal to charge more in winter for the same observed price. Note, there exist far more sophisticated probabilistic forecasts for the system price, which can be plugged directly into the work proposed here. The curves saturate at the fleet’s aggregate power bounds and can be used directly as dispatch policies for the aggregator or as bid/offer curves in reserve markets (e.g., mFRR).

\begin{figure}[t]
    \centering
    \includegraphics[width=0.48\textwidth]{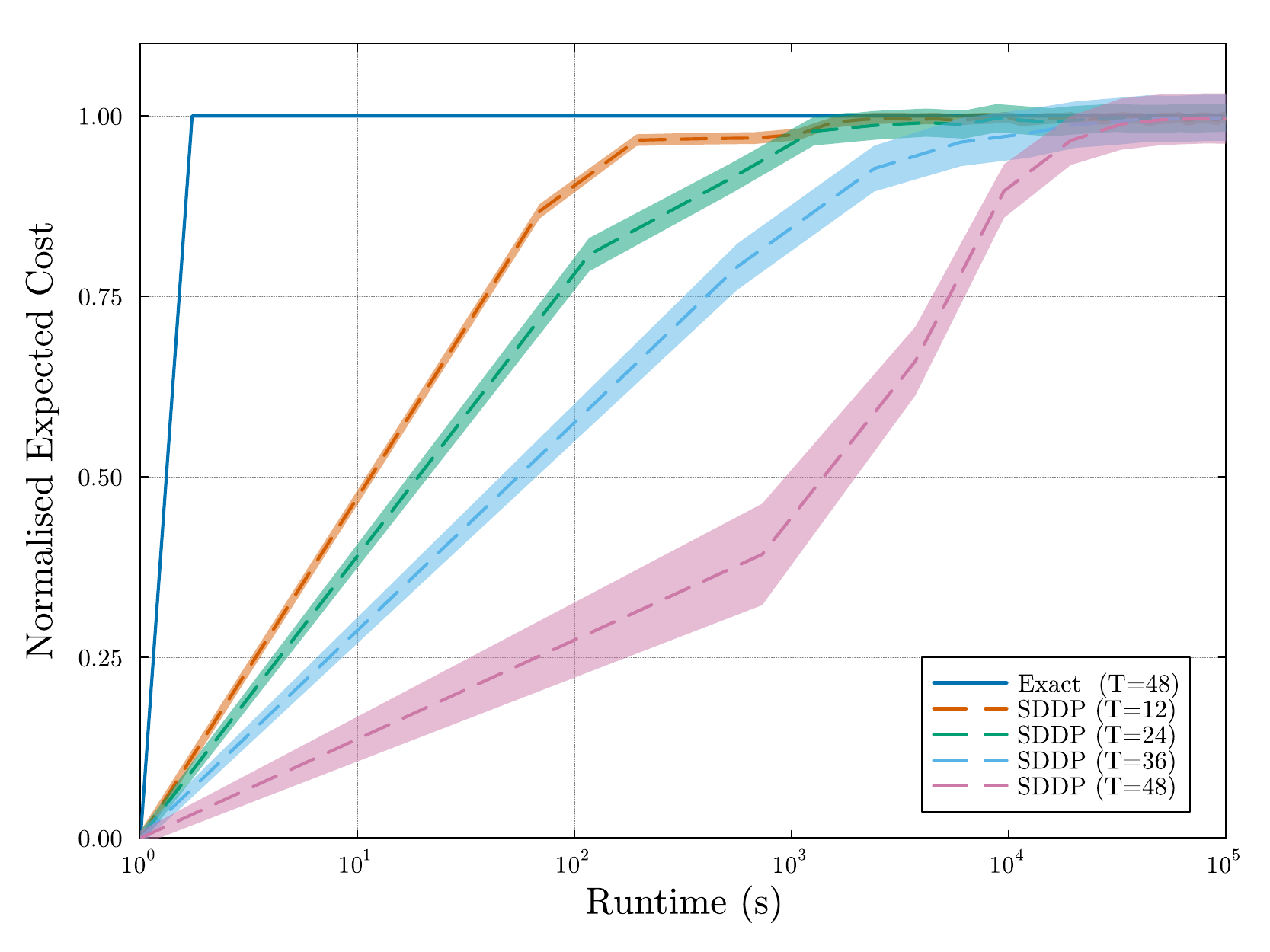}
\caption{Normalised expected cost versus wall-clock time (log scale) for $N=10$ EVs. The blue exact curve reaches the optimum after a single backward sweep, while SDDP curves improve monotonically but converge slower.}
    \label{fig:runtime_curve}
\end{figure}

\section{Conclusion}
We presented a geometric reformulation of the EV MSLP imbalance optimisation problem that replaces a high-dimensional device model with an exact aggregate representation. Exploiting the fact that the aggregate recourse set is a permutahedron, we derived an optimal disaggregation policy, an explicit state transition in aggregate coordinates, and derived an explicit form for a partitioning over which the expected recourse function is piecewise affine. The slopes and intercepts of all regions admit closed-form expressions via truncated expectations of future prices, yielding an exact recourse function and first-stage policy without iterative cut generation

This lays the groundwork for extensions in which population level constraints couple devices; for example, enforcing demand-response commitments specified by the TSO or to manage feeder constraints. Future directions include generalising to more heterogeneous device populations, incorporating risk-averse objectives (e.g., CVaR), or modelling correlated and non-stationary price processes. Overall, the paper shows that aggregating device flexibility as permutahedra, and leveraging the properties of these polytopes, removes the computational bottleneck that hampers standard SDDP on large EV populations, delivering exact policies with clear structure and practical runtimes.



\begin{figure}[t]
    \centering
    \includegraphics[width=0.48\textwidth]{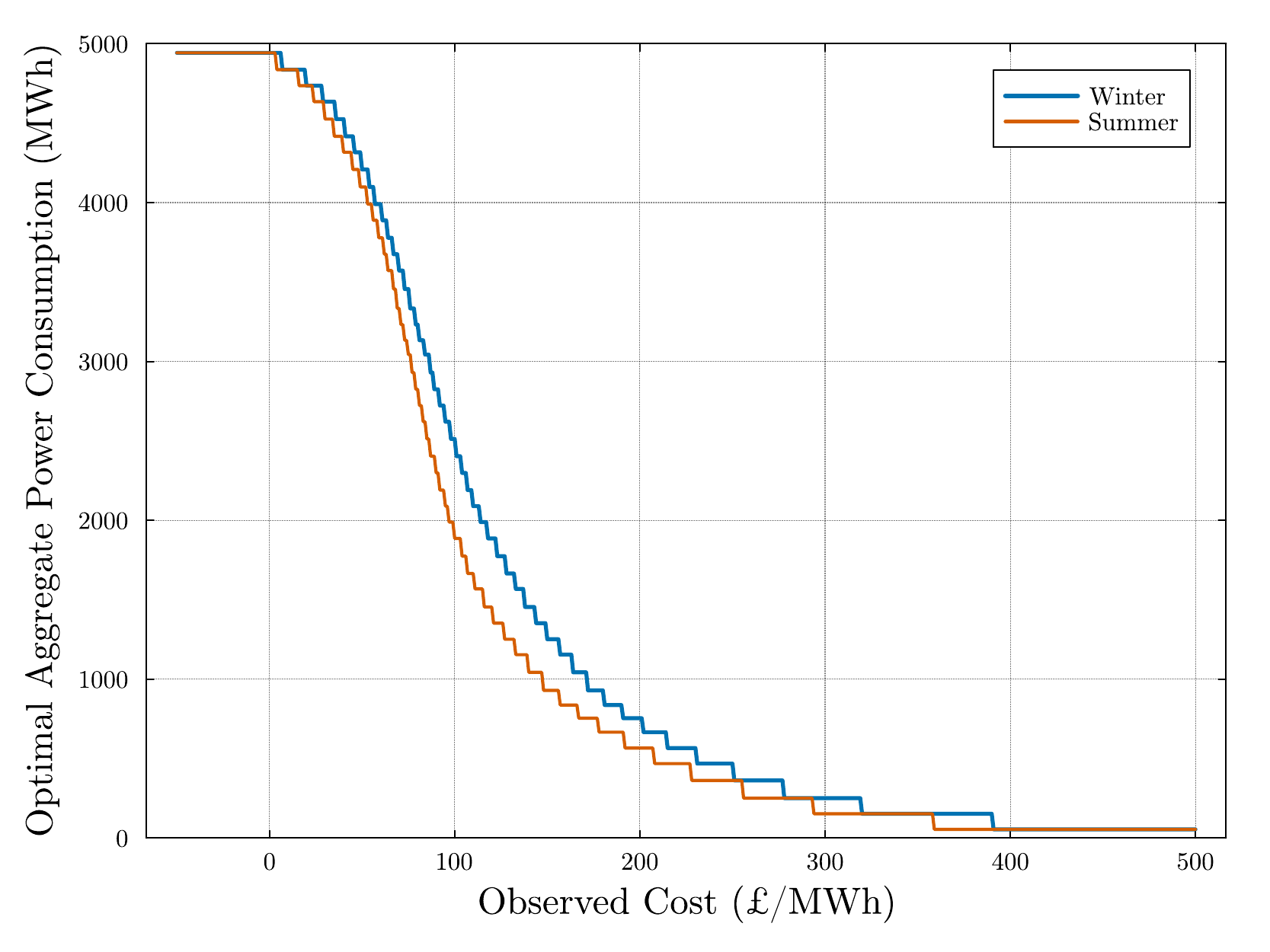}
    \caption{Price–response curves for a fleet of $50{,}000$ EVs over $48$ settlement periods: optimal aggregate consumption $\tilde u_1(c_1)$ versus the initial observed system price $c_1$.}
    \label{fig:cost_curve}
\end{figure}

\bibliographystyle{IEEEtran}
\bibliography{references}
\appendix
\subsection{Proof of \cref{lemma:opt_diagg_policy}}

\begin{proof}
There is a canonical ordering that prioritises which EVs to charge first \cite[Lemma III.5]{Evans2018RobustlyResources}, this induces the transition function $T_{\sigma^*}(\tilde{v}_{t-1}, \tilde{u}_t)$. Let $\tilde{v}^{\sigma}_{t}= T_{\sigma}(\tilde{v}_{t-1}, \tilde{u}_t)$ and 
$\tilde{v}^{\sigma^*}_{t} = T_{\sigma^*}(\tilde{v}_{t-1}, \tilde{u}_t)$. For all other policies, $\sigma$, $\tilde{v}^{\sigma}_{t} \preceq \tilde{v}^{\sigma^*}_{t}$. By \eqref{eq:majorization}, this implies $\Pi(v_t^\sigma) \subseteq \Pi(v_t^{\sigma^*})$. Therefore, the aggregate recourse set induced by $\sigma^*$ contains the aggregate recourse set of induced by any other $\sigma$. Because the objective is linear in aggregate power, enlarging the feasible set cannot increase the minimum cost; hence $\mathcal{V}_t(\tilde{v}^{\sigma^*}_{t}) \leq \mathcal{V}_t(\tilde{v}^{\sigma}_{t})$.
\end{proof}

\subsection*{Proof of \cref{lemma:affine_recourse}}
\begin{proof}
We prove this by backward induction in \(t\).
\noindent
\emph{Base case.}
When \(t=T\) we have \(n_T=1\) and
\(\mathcal V_{T}(\tilde v_{T-1})=\mathbb E[c_T]\tilde v_{T-1}^1\).
Hence \(w_T^1=\mathbb E[c_T]\), as stated.
\noindent
\emph{Inductive step.}
Fix \(t\le T-1\) and assume
\(\mathcal V_{t+2}(\tilde v_{t+1})=\sum_{i=1}^{n_{t+1}} w_{t+1}^i \tilde v_{t+1}^i\)
with \((w_{t+1}^i)_{i}\) nonincreasing.
Let \(\tilde v_t\) be given. By \cref{lemma:opt_diagg_policy},
for any \(\tilde u_{t+1}\in[\tilde v_t^j,\tilde v_t^{j+1}]\) we have
\[
\mathcal V_{t+2}(T_{\sigma^\star}(\tilde v_t,\tilde u_{t+1}))
= A_j(\tilde v_t)- w_{t+1}^j\,\tilde u_{t+1},
\]
where 
\[
A_j(\tilde v_t):=\sum_{i=1}^{j} w_{t+1}^i \tilde v_t^i
+ \sum_{i=j+1}^{n_t} w_{t+1}^{i-1}\tilde v_t^i .
\]
Thus, using \eqref{eq:agg_recourse},
\[
V_{t+1}(\tilde v_t,c_{t+1})
= \min_{j}\;\min_{\tilde u\in[\tilde v_t^j,\tilde v_t^{j+1}]}
\bigl\{A_j(\tilde v_t) + (c_{t+1}-w_{t+1}^{j})\,\tilde u\bigr\}.
\]
Since the inner objective is affine in \(\tilde u\) with slope \(c_{t+1}-w_{t+1}^{j}\),
the minimiser over the interval is its left endpoint if the slope is nonnegative and
the right endpoint otherwise. Because \((w_{t+1}^i)_i\) is nonincreasing, there is a
unique index
\[
j^\star=j^\star(c_{t+1})\in\{1,\ldots,n_t\}
\;\text{such that}\;
c_{t+1}\in [w_{t+1}^{j^\star},\,w_{t+1}^{j^\star-1})
\]
(with the conventions \(w_{t+1}^{0}=+\infty\), \(w_{t+1}^{n_t}=-\infty\)).
Therefore the optimizer is \(\tilde u^\star=\tilde v_t^{\,j^\star}\) and
\[
V_{t+1}(\tilde v_t,c_{t+1})
= \sum_{i=1}^{j^\star-1} w_{t+1}^i \tilde v_t^i
+ \sum_{i=j^\star+1}^{n_t} w_{t+1}^{i-1}\tilde v_t^i
+ c_{t+1}\,\tilde v_t^{\,j^\star}.
\]

Now let \(E_j:=\{\,w_{t+1}^{j}\le c_{t+1}\le w_{t+1}^{j-1}\,\}\),
\(p_{t+1}^j:=\mathbb P(E_j)\), and
\(\bar c_{t+1}^{\,j}:=\mathbb E[c_{t+1}\mid E_j]\).
Taking conditional expectation and then averaging over \(j\),
\[
\mathcal V_{t+1}(\tilde v_t)
=\sum_{j=1}^{n_t} p_{t+1}^j\;
\mathbb E\!\left[V_{t+1}(\tilde v_t,c_{t+1})\mid E_j\right]
= \sum_{i=1}^{n_t} w_t^i\,\tilde v_t^i,
\]
with 
\begin{equation*}
    \begin{aligned}
        w_{t}^i  &= \mathbb{P}(c_{t+1} \leq w_{t+1}^i) w_{t+1}^i
                    + \mathbb{P}(w_{t+1}^{i-1} \leq c_{t+1}) w_{t+1}^{i-1}\\
             & \hspace{3pt} +\mathbb{P}( c_{t+1} \in [w_{t+1}^i, w_{t+1}^{i-1}])\,\mathbb{E}\left[\,c_{t+1} \mid  c_{t+1} \in [w_{t+1}^i, w_{t+1}^{i-1}]\right]
    \end{aligned}
\end{equation*}
for $i=1,\ldots,n_{t-1}$. By inspection $w_{t}^i \geq w_{t}^{i+1} \;\; \forall i$, as required.
\end{proof}

\end{document}